\newtheorem{theorem}{Theorem}
\newtheorem{proposition}[theorem]{Proposition}
\theoremstyle{definition}
\newtheorem{remark}[theorem]{Remark}
\newcommand{\lr}[1]{\left(#1\right)}
\title{Absorbing-state phase transition in biased activated random walk}
\author{Lorenzo Taggi \vspace{0.1cm} \\ \textit{Max Planck Institute for Mathematics in the Sciences, Leipzig, Germany}}
\date{}
\begin{document}
\maketitle

\section*{Abstract}
We consider the activated random walk (ARW) model on $\mathbb{Z}^d$,
which undergoes a transition from an absorbing regime to a regime
of sustained activity.
In any dimension we prove that
the system is in the active regime when the particle density is less than one, 
provided that the jump distribution is biased and that the sleeping rate is small enough.
This answers a question from \textit{Rolla and Sidoravicius} (2012) and 
\textit{Dickman, Rolla and Sidoravicius} (2010) in the case of biased jump distribution.
Furthermore, we prove that the critical density depends on the jump distribution.

\section{Introduction}
In this paper we consider the activated random walk (ARW) model on the lattice. 
This is a continuous-time interacting particle system with conserved number of particles, 
where each particle can be in one of two states: A (active) or S (inactive, sleeping). 
Each A-particle performs an independent, continuous time
random walk on $\mathbb{Z}^d$ with jump rate $1$
and jump distribution $p( \cdot )$.
Moreover, every A-particle has a Poisson clock with rate $\lambda>0$
(\textit{sleeping rate}).
When the clock rings,
if the particle does not share
the site with other particles,
the transition $A \rightarrow S$ occurs,
otherwise nothing happens.
S-particles do not move and remain sleeping until the instant
when an other particle is present at the same vertex.
At such an instant, the particle which is in the S-state flips
to the A-state, giving the transition A+S $\rightarrow$ 2A.
The initial particle configuration is distributed
according to a product of Bernoulli distributions
having expectation $\mu \in [0, 1]$,
that we call \textit{particle density}.
As we consider initial configurations with only active particles,
from the previous rules it follows that sleeping
particles can be observed only if they 
occupy the site alone.

In ARW a phase transition arises from a
conflict between the spread of the activity and 
a tendency of the activity to die out.
We say that ARW exhibits \textit{local fixation}
if for any finite set  $V \subset \mathbb{Z}^d$, 
there exists a finite time $t_V$ such that after 
this time the set $V$ contains no active particles.
We say that ARW \textit{stays active} if local fixation does not occur.

Some of the central 
questions for this model involve the estimation
of the  \textit{critical density} which separates the two regimes,
$$
\mu_c(\lambda, p(\, \cdot \, ) ) := \inf \, \{ \mu \in [0,1] \, : \, \mathbb{P}(\mbox{ARW is active} ) \, > \, 0 \},
$$
where $\mathbb{P}(\mbox{ARW is active} )$ is intended as 
a function of the parameter $\mu$.
The 0-1 law and the 
monotonicity properties that have been proved
in the seminal article by Rolla and Sidoravicius \cite{Rolla}
imply that if $\mu > \mu_c$, then ARW sustains activity \textit{almost surely}.

In several articles an estimation for $\mu_c$ has been provided.
In one dimension, it has been proved by Rolla and Sidoravicius 
\cite{Rolla} that $\mu_c \in [\frac{\lambda}{1+\lambda},1]$.
Our definition of $\mu_c$
 implies that $\mu_c\leq 1$ since particles are initially distributed as Bernoulli random variables. However, even if we replace this with any product measure of density $\mu>0$, 
it is intuitive that $\mu_c\leq 1$, since at most one particle can fall asleep at any given vertex.
This fact has been proved in \cite{Gurevich, Rolla, Shellef}
in wide generality.
A fundamental question for this model
is whether $\mu_c < 1$ for any sleeping rate $\lambda$.
This question has been asked by
Dickman, Rolla and Sidoravicius \cite{Dickman}
and by Rolla and Sidoravicius \cite{Rolla}
and its answer is expected to be positive in wide generality.
In this article we provide a positive answer to this question 
in any dimension
in the case of biased jump distribution.
In particular, in one dimension we prove a stronger
statement, i.e, that $\mu_c \to 0$ as $\lambda \to 0$.

We are now ready to state our results.
We let  $ \mathbf{m} = \sum_{z \in \mathbb{Z}^d} p(z) \, z$
be the expected jump of the random walk,
we let $\mathbf{e}_j$ be the axis direction such that 
$\mathbf{m} \cdot \mathbf{e}_i$ takes the maximum value,
we let $\mathcal{H} = \{ z \in \mathbb{Z}^d \, \mbox{ s.t. }  \mathbf{e}_{j} \cdot z \leq 0\}$
and we define the number,
\begin{equation}
\label{eq:defF}
F\left(\lambda, p(\, \cdot \,)\right) := E[(1 + \lambda)^{-\ell_{\mathcal{H}}}],
\end{equation}
where $\ell_{\mathcal{H}}$ is the total time spent on $\mathcal{H}$ by a discrete time random walk
with jump distribution $p(\, \cdot \,)$.
Such a number is the probability that a continuous time random walk never deactivates, if it jumps at rate $1$ 
and it deactivates at a rate $\lambda$ only when it is in $\mathcal{H}$.
As a consequence of the law of large numbers, for any jump distribution such that
$\mathbf{m} \neq \mathbf{0}$ and for any $\lambda >  0$, such a probability is positive
and, furthermore, $\lim\limits_{\lambda \rightarrow 0} F(\lambda, p(\, \cdot\, )) = 1$, as
the walker spends only a finite amount of time in $\mathcal{H}$.

\begin{theorem}
\label{theo:maintheo1}
Consider ARW on $\mathbb{Z}$ 
with jump distribution $p (\, \cdot \, )$
having a finite support and
such that $\mathbf{m} \neq \mathbf{0}$.
Then, $$\mu_c\left(\lambda, p(\, \cdot \, ) \right) \leq 1 - F\left(\lambda, p(\, \cdot\, )\right).$$
\end{theorem}
\begin{figure}
\begin{center}
\includegraphics[scale=0.75]{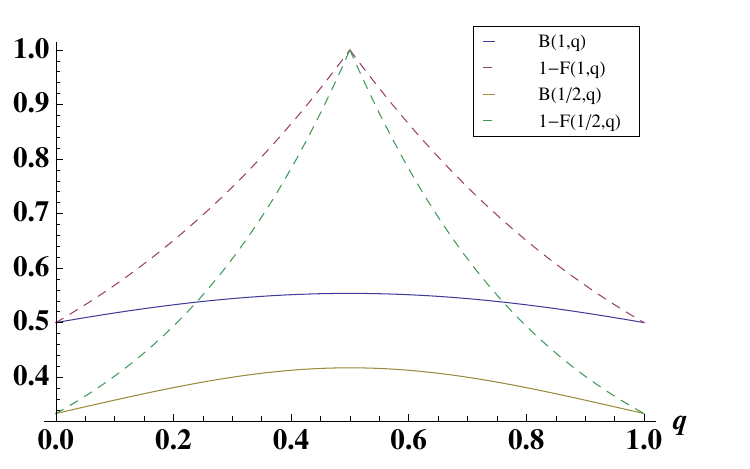}
\caption{Upper and lower bound (respectively, dashed and continuous lines) for the critical density in one dimension and jumps on nearest neighbours, $p(1)=q$ and $p(-1)=1-q$. }
\label{fig:plot}
\end{center}
\end{figure}

The next theorem provides an upper bound for the critical density
in dimension $d \geq 2$.
\begin{theorem}
\label{theo:maintheo2}
Consider ARW on $\mathbb{Z}^d$ 
with jump distribution $p (\, \cdot \, )$
having a finite support and such that $\mathbf{m} \neq \mathbf{0}$.
Then,
\begin{equation}
\mu_c\left (\lambda, p(\, \cdot \, ) \right) \leq \frac{1}{ {F(\lambda, p(\, \cdot \, ) )} + 1}.
\end{equation}
\end{theorem}

Although $\mu_c$ is conjectured to be strictly less than 
one for any positive $\lambda$ and for any jump distribution,
our proof techniques allow to answer 
such a question only under the assumption
of biased jump distribution.
A second, natural question is how and whether
the critical density depends on the jump distribution.
Our third theorem states that the critical
density is not a constant function of the jump distribution.
\begin{theorem}
\label{theo:depend}
Consider ARW with jump distribution on nearest neighbours, $p(1)=q$ and $p(-1)=1-q$, where $q \in [0,1]$.
For any fixed  $\lambda \in \mathbb{R}_+$, the critical density $\mu_c(\lambda, q)$ is not a constant function of $q$.
\end{theorem}
The proof of the theorem uses the stabilization
procedure of
Rolla and Sidoravicius \cite{Rolla}
and it is based on an observation.
In particular, we provide a new lower bound for the critical density as a function
of the sleeping rate and of the bias parameter (see Figures \ref{fig:plot} and \ref{fig:coparison})
and we prove that  $\mu_c(\lambda, q) > \frac{\lambda}{1+\lambda}$
when $q \not\in \{0, 1\}$.
The statement of Theorem \ref{theo:depend} follows from our lower bound,
as it is known \cite{Hoffman} that $\mu_c(\lambda, q) =\frac{\lambda}{1+\lambda}$
when $q \in \{0, 1\}$.
\begin{figure}
\begin{center}
\includegraphics[scale=0.7]{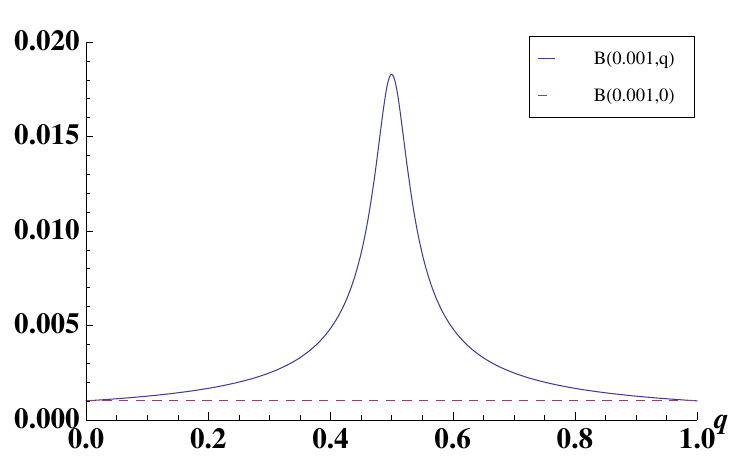}
\caption{Lower bound $B(q,\lambda)$  for the critical density for low lambda ($\lambda=1/1000$) as a function of the bias parameter $q$ (continuous line), contrasted with the lower bound $\lambda / (1+\lambda)$ from~\cite{Rolla} (dashed horizontal line).}
\label{fig:coparison}
\end{center}
\end{figure}

\begin{remark}
Our Theorems \ref{theo:maintheo1} and \ref{theo:depend} 
hold for any distribution
 of the initial location of the particles which is a product
 of identical distributions parametrized by their expectation $\mu$. 
 On the contrary, if we fixed beforehand 
 a distribution which is different from Bernoulli,
 the statement of Theorem \ref{theo:maintheo2} would 
 be that $\mu_c < 1$ only for small enough $\lambda$.
\end{remark}

We end this introductory section by presenting the structure of the article.
In Section \ref{sect:description} we introduce the proofs of Theorems
\ref{theo:maintheo1} and \ref{theo:maintheo2} to the reader.
In Section \ref{sect:Diaconis} we present the Diaconis-Fulton graphical representation, 
which is a fundamental framework for the analysis of ARW.
In Section \ref{sect:upperbound1} we prove our upper bound for the critical density in one dimension.
In Section \ref{sect:upperbound2} we prove our upper bound in more than two dimensions.
In Section \ref{sect:lowerbound} we sketch
the stabilization algorithm of Rolla and Sidoravicius 
and we present our observation for the proof of Theorem \ref{theo:depend}.

\section{Some words on the proofs}
\label{sect:description}
Our proofs rely on the
discrete Diaconis-Fulton representation
for the dynamics of ARW.
As it has been proved in~\cite{Rolla},
local fixation for ARW
is related to the stability properties of this 
representation, which leaves aside the chronological
order of events. 

At every site $x \in \mathbb{Z}^{d}$, an infinite sequence of independent and 
identically distributed random variables is defined.
Their outcomes are some operators (``instructions'') acting on the current particle configuration by moving one particle from one site to the other one or by trying to let the particle turn to the S-state.

Local fixation for the dynamics of ARW is related to the
the number of instructions that must be used
in order to stabilize the initial  particle configuration.
Denote by $B_L$ a compact subset
of $\mathbb{Z}^d$ such that $B_L \uparrow \mathbb{Z}^d$
as $L \rightarrow \infty$. For every $x \in \mathbb{Z}^d$,
let $m_{B_L, \eta, \tau} (x)$ be
the number of instructions that must be used at $x$
in order to make the configuration $\eta$
stable in  $B_L$ according to the instructions $\tau$ and denote by $\xi_{B_L, \eta, \tau}$
the corresponding stable configuration.
A configuration is stable in $B_L$ if there are no active particles in $B_L$. 
A fundamental property of
the representation is 
\textit{commutativity}, i.e.,
$\xi_{B_L, \eta, \tau}$
and $m_{B_L, \eta, \tau}$
do not depend on the order
according to which instructions have been used.
A second property of the representation 
is that if there exists a positive constant 
$c$ such that for every integer $L$ large enough, 
\begin{equation}
\label{eq:implication}
\mathcal{P}^{\nu} (m_{B_L, \eta, \tau} ( 0 )  =0  ) \geq c,
\end{equation}
then ARW fixates almost surely.
Analogously, if there exists a positive constant $c^{\prime}$
such that for every integer $L$ large enough,
\begin{equation}
\label{eq:implication2}
\mathcal{P}^{\nu} (m_{B_L, \eta, \tau} ( x )  > c^{\prime} \, L ) \geq c^{\prime},
\end{equation}
then ARW stays active almost surely.
The proof of our results is based on the definition of stabilization algorithms 
for the set $B_L$ and on counting the number of particles
crossing the origin, which is chosen to belong to the inner boundary of $B_L$.
In order to prove the upper bound (resp. the lower bound),
we provide an estimation of the choice of parameters such that 
$(\ref{eq:implication})$ (resp. \ref{eq:implication2}) holds
for every $L$ large enough.

The proof of Theorem \ref{theo:maintheo2}
is based on the following idea. 
In two dimensions, we introduce the set $B_L = [-L+1, 0] \times [-L^3, L^3 ] $ by assuming that
$\mathbf{m} \cdot \mathbf{e}_1 >0$ by symmetry.
We define a stabilization procedure where particles are moved one by one
until a certain ``stopping'' event occurs. 
By ``moving'', we mean that we use always the instruction
on the site where the particle is located until such an event occurs.
We say that a particle is ``good'' if it stops on 
one of the sites which is empty in the initial particle configuration or if it leaves
$B_L$ from the boundary side containing the origin.
Because of the choice of our stopping events, of the order according
to which particles are moved and of the bias of the jump distribution,
we can provide a positive uniform
lower bound $F$ for the probability that a particle is good.
Thus, we show that, if the density of good particles $\mu \cdot F$ is
higher than the density of empty sites $1 - \mu$, then a positive
density of particles leaves $B_L$ by crossing the boundary side containing the origin.
In one dimension this would be enough to prove almost sure activity when $\mu< 1$
with $B_L = [-L,0]$, as the number of sites belonging to the inner boundary of $B_L$ 
does not grow to infinity with $L$.
Instead, in two or more dimensions 
a control of which boundary sites are crossed by the particles
jumping away from $B_L$ is needed. To obtain such a  control,  we adapt to our setting
the method of ghost explorers ~\cite{Lawler} and we exploit the symmetry properties of the random walk. 
Thus, we prove that the number of particles crossing the origin
before leaving $B_L$ is larger than $c  L$ for some $c>0$ with high probability.

This idea applies also to the one dimensional case, but actually the stabilization procedure that has been employed
in the proof of Theorem \ref{theo:maintheo1} (one dimension) is different
from the one described above, as the same particle is ``moved'' several times
in the course of the procedure and, every time it fills an empty 
site, it paves the way to the particle that are moved subsequently. 
This allows to prove a stronger result, i.e., that
activity is sustained at arbitrarily low density
by setting $\lambda$ small enough. 

The proof of Theorem \ref{theo:depend} uses the stabilization
procedure that has been developed by Rolla and Sidoravicius \cite{Rolla}
and it is based on an observation. We refer the reader 
to Section \ref{sect:lowerbound}.

\section{Diaconis-Fulton representation}
\label{sect:Diaconis}
In this section we describe the
Diaconis-Fulton graphical representation
for the dynamics of ARW.
We follow~\cite{Rolla}. 
Let $\eta \in {\mathbb{N}_{0\rho}}^{\mathbb{Z}^d}$
denote the particle configuration,
where  $\mathbb{N}_{0\rho} = \mathbb{N}_0 \cup \{ \rho \}$.
We define an order relation for $\rho$, which 
represents the presence of an $S$-particle at one site,
setting
$0 < \rho < 1 < 2 \ldots$. 
We also let
$ |\rho| = 1$, so that
$|\eta_t(x)|$ counts the number of particles regardless of their state.
The addition is defined by $\rho + 0 = \rho$, 
and $\rho + k = k  + 1 $ if $k \geq 1$,
providing the $A + S \rightarrow 2A$ transition.
The $A \rightarrow S$ transition is represented by 
$\rho \cdot k$, where $\rho \cdot 1 = \rho$ 
and $\rho \cdot k = k$ if $k \geq 2$.
We introduce two operators, ``move'' from $x$ to $y$,
which is denoted by $\tau_{xy}$,
 and ``sleep'' at $x$, which is denoted by $\tau_{x\rho }$.
These operators act on the particle configuration.
For any $\eta \in  \mathbb{N}_{0\rho}^{\mathbb{Z}^d}$,
the configuration $\tau_{xy} \eta \in \mathbb{N}_{0\rho}^{\mathbb{Z}^d}$
is defined as,
\begin{equation}
\tau_{xy} \eta (z) = 
\begin{cases} 
\eta(z) + 1  &\mbox{if }  z = y, \\ 
\eta(z) - 1 & \mbox{if } z=x, \\
\eta(z)  & \mbox{if } z \neq x  \mbox{ and } z \neq y, \\
\end{cases}
\end{equation}
and the 
configuration $\tau_{x\rho} \eta \in \mathbb{N}_{0\rho}^{\mathbb{Z}^d}$
is defined as,
\begin{equation}
\tau_{x\rho} \eta (z) = 
\begin{cases} 
\eta(z) \cdot \rho  &\mbox{if }  z = x, \\ 
\eta(z) & \mbox{if } z \neq x. \\
\end{cases}
\end{equation}
A site $x \in \mathbb{Z}^d$ is \textit{stable} in the configuration $\eta$ if
$\eta(x) \in \{0, \rho \}$
and it is \textit{unstable} if $\eta(x) \geq 1$.
We fix an array of  \textit{instructions} 
$\tau = ( \tau^{x,j}: \, x \in \mathbb{Z}^d, \, j \in \mathbb{N})$,
where $\tau^{x,j}= \tau_{xy}$ 
or $\tau^{x,j}= \tau_{x\rho}$.
Let $h = ( h(x)\, : \,  x \in \mathbb{Z}^d)$ count the number of 
instructions used at each site.
We say that we \textit{use} an instruction
at $x$ when we act on the current
particle configuration $\eta$ through the operator $\Phi_x$,
which is defined as,
\begin{equation}
\label{eq:Phioperator}
\Phi_x ( \eta, h) =
( \tau^{x, h(x) + 1}  \, \eta, \, h + \delta_x).
\end{equation}
The operation $\Phi_x$ is \textit{legal} for $\eta$ if $x$ is unstable in $\eta$, i.e.,
$\eta(x) \geq 1$, otherwise it is \textit{illegal}.

\vspace{0.8cm}

\noindent \textit{\textbf{Properties.}}
We now describe the properties of this representation.
Later  we discuss how they are related to the the stochastic dynamics of ARW.
For $\alpha = ( x_1, x_2, \ldots x_k)$,
we write $\Phi_{\alpha} = \Phi_{x_k} \Phi_{x_{k-1}}
\ldots \Phi_{x_1}$ and we say that $\Phi_{\alpha}$ is
\textit{legal} for $\eta$ if $\Phi_{x_l}$
is legal for $\Phi_{(x_{l-1}, \ldots, x_1)} (\eta,h) $
for all $l \in \{ 1, 2, \ldots k \}$.
Let $m_{\alpha} = ( m_{\alpha}(x) \, : \,x \in  \mathbb{Z}^d )$
be given by,
 $m_{\alpha}(x) \, = \, \sum_{l} \mathbbm{1}_{x_l = x},$
the number of times the site $x$ appears in $\alpha$.
We write $m_{\alpha} \geq m_{\beta}$ if
$m_{\alpha} (x)  \,  \geq \, m_{\beta} (x) \, \, \, \forall x \in \mathbb{Z}^d$.
Analogously we write $\eta^{\prime}   \geq   \eta$ if $\eta^{\prime} (x) \, \geq \, \eta(x)$
for all $x \in \mathbb{Z}^d$. We also write $(\eta^{\prime}, h^{\prime}) \geq (\eta, h)$
if $\eta^{\prime} \geq \eta$ and $h^{\prime} = h$.
Let $\eta, \eta^{\prime}$ be two configurations, $x$ be a site in $\mathbb{Z}^d$
and  $\tau$ be a realization
of the set of instructions. 
Let V be a finite subset of $\mathbb{Z}^d$. A configuration $\eta$ is said to be \textit{stable} in $V$
if all the sites $x \in V$ are stable. We say that $\alpha$ is contained in $V$
if all its elements are in $V$ and we say that $\alpha$ \textit{stabilizes} $\eta$ in $V$
if every $x \in V$ is stable in $\Phi_\alpha \eta$.
For the proof of the following Lemmas we refer to~\cite{Rolla}.

\paragraph*{Lemma 1}
\label{prop:lemma2}
(Abelian Property)
If $\alpha$ and $\beta$ are both legal sequences for $\eta$
that are contained in $V$ and stabilize $\eta$ in $V$, 
then $m_{\alpha} = m_{\beta}$. In particular, $\Phi_{\alpha} \eta = \Phi_{\beta} \eta$.

\vspace{0.5cm}

\noindent By Lemma $2$, $m_{V, \eta, \tau} = m_{\alpha}$ and $\xi_{V, \eta, \tau} = \Phi_{\alpha} \eta$ are well defined.
\paragraph*{Lemma 2}
\label{prop:lemma3}
(Monotonicity)
If $V \subset V^{\prime}$ and $\eta \leq \eta^{\prime}$, then $m_{V, \eta, \tau} \leq m_{V^{\prime}, \eta^{\prime}, \tau}$.

\vspace{0.5cm}

\noindent By monotonicity, the limit 
$$m_{\eta, \tau} = \lim\limits_{ V \uparrow \mathbb{Z}^d} m_{V, \eta, \tau},$$ exists and does not depend
on the particular sequence $V \uparrow \mathbb{Z}^d$.

We now introduce a probability measure on the space of instructions and of particle configurations.
We denote by $\mathcal{P}$ the probability measure according to which,
for any $x \in \mathbb{Z}^d$, $j \in \mathbb{N}$,
$\mathcal{P} (  \tau^{x,j} = \tau_{xy}   ) = \frac{p(y-x)}{1 + \lambda}$
and  $\mathcal{P} (  \tau^{x,j} = \tau_{x\rho}   ) = \frac{1}{1 + \lambda}$ independently.
Finally we denote by $\mathcal{P}^\nu$ the joint law of
$\eta$ and $\tau$, where 
$\eta$ has distribution $\nu$ and 
it is independent from 
$\tau$. The following lemma relates the dynamics of ARW to the stability property of the representation.
\paragraph*{Lemma 3}
\label{prop:lemma4}
Let $\nu$ be a translation-invariant, ergodic distribution with finite density $\nu(\eta( \mathbf{0}))$.
Then $\mathbb{P}^{\nu}  (\mbox{ ARW fixates locally } ) = \mathcal{P}^{\nu} ( m_{\eta, \tau} (\mathbf{0}) < \infty ) \in \{0, 1 \}$.

\vspace{0.5cm}

\noindent
The next lemma states that by replacing an instruction ``sleep'' by a neutral instruction the number of instructions
used at the origin for stabilization cannot decrease.
Thus, besides the $\tau_{xy}$ and $\tau_{x\rho}$,
consider in addition the neutral instruction $\mathcal{I}$, 
given by $\mathcal{I} \, \eta = \eta$.  
Given two arrays $\tau = \left( \tau^{x,j} \right)_{x ,\, j }$ 
and $\tilde{\tau} = \left( \tilde{\tau}^{x,j} \right)_{x, \, j }$,
we write $\tau \leq \tilde{\tau}$ if for every $x \in \mathbb{Z}^d$ and $j \in \mathbb{N}$,
either $\tilde{\tau}^{x,j} = {\tau}^{x,j}$ or $\tilde{\tau}^{x,j} = \mathcal{I}$ and 
${\tau}^{x,j} =  \tau_{x\rho}$.

\paragraph*{Lemma 4}
\label{prop:lemma5}
(Monotonicity with enforced activation)
Let $\tau$ and $\tilde{\tau}$ be two arrays of instructions such that $\tau \leq \tilde{\tau}$.
Then, for any finite $V \subset \mathbb{Z}^d$ and $\eta \in \mathbb{N}_{0\rho}^{\mathbb{Z}^d}$,  
$m_{V, \eta, \tau} \leq m_{V, \eta, \tilde{\tau}}.$

\section{Proof of Theorem \ref{theo:maintheo1}}
\label{sect:upperbound1}
Without loss of generality we assume $\mathbf{m} >0$ and we consider the set $B_L = [-2L,\, 0]$.
The case $\mathbf{m} < 0$ can be recovered by reflection symmetry.
We stabilize only particles in $[-L, 0]$, but we consider the site $-2L-1$ as the outer boundary of the set,
i.e., once a particle is on a site   $ \leq -2L-1$ it is ``lost''.

Let $\tilde{N}_0^{L}$ be the number of particles in $[-L,0]$.
First, we ``move'' every particle starting in $[-L,0]$ until every site of $[-L,0]$ is
either empty or it hosts only one active particle.
This means that if the site hosts initially $n>1$ particles, we move $n-1$ particles
until each of them fills an empty site.
By ``moving'', we mean that we always use the instruction on the site where the particle is located 
until the particle reaches an empty site.
Now, every site in $[-L,0]$ either hosts one particle or is empty.
Let $N_0^L$ be the number of particles in $[-L,0]$.
The next proposition states that with uniformly positive probability
we loose  a number of particles that is bounded from above by a number that
not depend on $L$.
\begin{proposition}
\label{prop:initial}
There exist two positive constants $c$ and $K$ such that for all $L \in \mathbb{N}$,
\begin{equation}
\label{eq:prop}
\, \, 
\mathcal{P}^{\nu} \,   ( \, \tilde{N}_0^{L}- N_0^L \,  \leq  c  \, \,  ) 
\,  \geq K.
\end{equation}
\end{proposition}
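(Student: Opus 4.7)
The plan is to establish the quantitative bound $E[X_L] \leq C$ for some constant $C = C(\mu, \lambda, p(\cdot))$ independent of $L$, where $X_L := \tilde{N}_0^{L} - N_0^{L} \geq 0$ is the number of particles that leave $[-L,0]$ during the initial ``moving'' procedure. The non-negativity is because the procedure moves only particles originally in $[-L,0]$ and cannot add any, so $X_L$ equals the net outflow from $[-L,0]$. Markov's inequality then yields $\mathcal{P}^\nu(X_L \leq 2C) \geq 1/2$, which proves the proposition with $c = 2C$ and $K = 1/2$.

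By the commutativity property (Lemma 2) of the Diaconis-Fulton representation, I can process the excess particles one at a time in a chosen order without changing the final configuration. Conditional on the prior state, each moved particle performs a random walk with jump distribution $p(\cdot)$ until it either settles at a currently empty site or leaves $[-L,0]$; the sleep instructions that fire while the particle shares its site with at least one other particle are wasted and do not affect its position. A useful structural observation is that an initially occupied site in $[-L,0]$ always retains at least one particle throughout the procedure (since a site is only depleted while it still holds at least two particles), so the only candidate stopping sites are those that were initially empty.

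Decompose $X_L = X_L^{\text{left}} + X_L^{\text{right}}$ according to whether a lost particle crosses $-L-1$ or $+1$. For $X_L^{\text{left}}$, a random walk with drift $\mathbf{m} > 0$ starting at $x \in [-L,0]$ reaches $-L-1$ with probability at most $\rho^{x+L+1}$ for some $\rho \in (0,1)$, by the standard gambler's-ruin / large-deviation estimate for walks with finite-support jumps and positive drift; summing over starting sites and using linearity of expectation gives $E[X_L^{\text{left}}] \leq \mu\rho/(1-\rho) =: C_1$ uniformly in $L$. For $X_L^{\text{right}}$, an excess particle starting at $x$ exits on the right only if its trajectory reaches $+1$ without being absorbed at any initially empty site in $[x+1,0]$; since the initial empty density equals $\nu_0 > 0$, this probability decays exponentially in $-x$, yielding $E[X_L^{\text{right}}] \leq C_2 < \infty$ uniformly in $L$. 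Adding the two estimates gives the desired $E[X_L] \leq C_1 + C_2$.

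The main technical obstacle is making the right-exit bound rigorous, since initially empty sites may be filled by previously processed excess particles and thereby cease to be stopping sites for subsequent trajectories. I intend to handle this either by processing the excess particles in a carefully chosen order adapted to the drift (so that earlier-settled particles tend not to lie along the subsequent rightward trajectories), or by dominating the actual dynamics with a modified pessimistic process in which a sparse subset of initially empty sites is reserved as guaranteed absorbers, each used by at most one particle, so that the trajectory-by-trajectory geometric bound applies independently. The left-exit estimate, by contrast, is insensitive to these filling dynamics because the drift alone makes leftward excursions exponentially unlikely irrespective of which empty sites remain available.
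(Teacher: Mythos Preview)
Your left-exit estimate is fine: each excess particle's trajectory is a fresh random walk with positive drift, so the probability of ever reaching $-L-1$ from starting site $x$ is at most $\rho^{x+L+1}$ regardless of which empty sites remain, and the geometric sum is uniformly bounded. The right-exit bound, however, is a genuine gap that you have not closed. The heuristic that a particle starting at $x$ exits right only if it avoids all initially empty sites in $[x+1,0]$ fails once earlier-processed particles have filled some of those sites, and neither of your proposed remedies is carried out. Both in fact face real obstacles: with rightward drift, particles processed earlier tend to settle to the right of their starting point, precisely where later particles need absorbers, so no simple ordering decouples the trajectories; and a reserved-absorber scheme must match order-$L$ many excess particles to a sparse absorber set in a way compatible with the random-walk dynamics, which is a nontrivial combinatorial problem in its own right and would not obviously yield a summable bound. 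As written, the proposal is a plan for the hard half of the argument rather than a proof.

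The paper's argument is entirely different and sidesteps this difficulty. It proceeds by contradiction: if the proposition failed, then for every $c$ some interval $[-L^*,0]$ would lose more than $c$ particles with probability at least $1/2$, so some inner-boundary site $z$ is crossed at least $c/k_2$ times with probability at least $1/(2k_2)$; by translation invariance and Lemma~4 this forces ARW to stay active. The key observation is that during the moving procedure every toppled site holds at least two particles, so sleep instructions are always wasted and the particle-crossing count has the same law for every value of $\lambda$. Activity would therefore follow for \emph{all} $\lambda$ simultaneously, contradicting the known lower bound $\mu_c \geq \lambda/(1+\lambda)$ (hence $\mu_c\to 1$ as $\lambda\to\infty$) when $\mu<1$. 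This indirect route needs no quantitative control of $E[X_L]$ at all, and as the paper remarks it even works without the drift assumption.
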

\begin{proof}[Proof of Proposition \ref{prop:initial}]
Since we are only moving particles that are not alone,
this is equivalent to the model with $\lambda=\infty$.
By~\cite{Cabezas}[Theorem 4], at $\lambda=\infty$,
there is fixation for any $\mu < 1$. Therefore,
$m_{[-L,0], \eta, \tau}(z)$ is a finite random variable for any 
$z \in [-L,0]$, and thus the sum $\sum_{z}m_{[-L,0], \eta, \tau}(z)$
for $z$ on the inner boundary of an interval
$[-L,0]$ is tight with respect to $L$.
Since each particle leaving $[-L, 0]$
must perform a jump from a site of its inner boundary,
the result follows.
\end{proof}

\noindent Now every site in $[-L,0]$ hosts at most one particle, which is necessarily active. 
We stabilize the set $[-L, 0]$ according to the following rule.
Let $z_0=-L$. If the site is empty, we do not do anything.
If $z_0$ hosts one particle, then we move it until one of the following events occurs:
\textbf{(1)} the particle sleeps somewhere in $[-2L, z_0]$, 
\textbf{(2)} the particle reaches a site  $x \leq -2L-1$, 
\textbf{(3)} the particle reaches the first empty site in $[z_0+1, 0]$,
\textbf{(4)} the particle reaches a site $x \geq 0$.
If $(3)$ or $(4)$ occur, we say that a \textit{successful jump} has been performed.

As the random walk is biased to the right, we can uniformly bound from below by a constant $F_L$ the probability of a successful jump.
Indeed, consider now a random walk $\left( Z(j) \right)_{j \in \mathbb{N}}$ starting from $Z(0) = z_0$ in the following environment. Namely, if $y > z_0$ then the walker located at $y$ jumps to $y+z$ with probability $p(z)$. If $y \leq z_0$, then the walker jumps to $y + z$ with probability $\frac{p(z)}{1 + \lambda}$ and it sleeps with probability $\frac{\lambda}{1 + \lambda}$. As the random walk $(Z(j))_{j \in \mathbb{N}}$ can sleep on \textit{any} site in $(z_0 -L, z_0]$ and as 
$z_0 - L \geq -2L$, then the probability of a successful jump in the activated random walk model cannot be smaller than $F_L$.

Now let $z_1 = z_0 + 1$ and observe that every site in $[z_1, 0]$ is either empty or it hosts one active particle.
Let $N^L_1$ be the number of particles in $[z_1,0]$.
If $z_1$ hosts no particles, we do not do anything. 
Instead, if $z_1$ hosts one particle, we move such a particle as before, until one of the four events above occurs.
Again, a successful jump occurs with probability at least $F_L$.
We then define $z_2 =z_1 + 1$ and we continue in this way until we reach $z_L$.
We observe that, at every step $i$, 
$N^L_{i+1} = N^L_i$ with probability at least $F_L$ and $N^L_{i+1} = N^L_i-1$ with probability at most
$1-F_L$. 

Now we define  $F:= \lim\limits_{L \rightarrow \infty}  F_L,$ which corresponds
to the constant  (\ref{eq:defF}) defined before the statement of the theorem.
We observe that for any positive real $\epsilon$,
$N_0^L \geq ( \mu  - \epsilon ) L $ 
and $N^L_L \geq  N_0^L -  (1-F + \epsilon)   L  = ( \mu - 1 + F - 2 \, \epsilon)  L$ 
with high probability as $L$ is large enough.
Thus, for any positive $\delta$ such that $\mu = 1 - F + \delta$,
we let $\epsilon := \frac{\delta}{3}$ and we conclude that
$N_L^L \geq \frac{\delta}{3}  L$ with high probability.
Now, observe that $N^L_L$ corresponds to the number of particles that left the set $[-2L,0]$ from the right boundary.
In case of jumps on nearest neighbours, each of these particles must have crossed the origin. 
In case of biased distribution with general (finite) support, the same conclusion does not hold. 
Thus, let $Q_L := \{ z \in  [-L,0] \, : \, \exists x \in  \mathbb{Z} \setminus [-L,0] \, \mbox{ s.t. } p(x-z)>0\}$
be the inner boundary of $B_L$ and let $K_2$ be a constant such that $|Q_L| \leq K_2$ for every $L$.
Thus, as at least $N^L_L$ particles left the set $[-2L,0]$, then
$\exists z  \in Q_L$ such that $m_{[-2L,0], \eta, \tau} (z) \geq \frac{\delta}{3 K_2}  L$ with high probability. 
By the union bound, this implies that there exists a site $z \in Q_L$ such that
for every $L$ large enough,
\begin{equation}
\mathcal{P}^{\nu} \left( m_{[-2L,0], \eta, \tau} (z)  \geq \frac{\delta}{3 K_2}  L \right) \geq \frac{1}{2K_2}.
\end{equation} 
Thus, by using translation invariance and by Lemma 3 we conclude that ARW stays active almost surely.
\qed

\section{Proof of Theorem \ref{theo:maintheo2}}
\label{sect:upperbound2}
We present the proof in the case of two dimensions.
The same arguments can be adapted to the case of more than two dimensions.
We assume that $\mathbf{m} \cdot \mathbf{e}_1 >0$ and 
we introduce the set $B_L = \{  (x,y) \in \mathbb{Z}^2 \, : \, x \in [-L+1, 0], y \in [-L^3, L^3]  \}$.
We order the sites of $B_L$ by writing $B_L = \{z_1, z_2, \ldots z_{|B_L|}\}$,
requiring that sites with smaller $x$ appear first.
We stabilize the set $B_{2L}$, but we ``move'' only particles
which start from sites in $B_L$, as we want them to be ``far''
from the boundary of the set.
By ``moving'', we mean that 
we always use the instruction
on the site where the particle is located until 
a certain event occurs.
In our stabilization procedure, we say that a particle
is ``good'' if it occupies one of the sites that is empty
for the initial configuration or if it leaves $B_L$ by crossing
the line $x=0$. Because of the bias and of the order
according to which particles are moved,
we can provide a positive uniform lower bound for the probability of a particle being good.
The general goal of the proof is to show that, if the density of empty sites 
for the initial configuration is less than the density of good particles,
then a positive density of particles must leave $B_L$ by crossing the line $x=0$.
We use translation invariance then to show that at least $ c  L$ particles
cross the origin with high probability for some $c>0$,
which in turn implies almost sure activity by Lemma 3.

The stabilization procedure is defined as follows.
We consider the first site in the order, $z_1 = (x_1, y_1)$, and we move one of its 
particles until one of the following events occurs. Namely,
\begin{enumerate}
\item[(1)] either the particles reaches one empty site $(x,y)$ such that $x > x_1$
\vspace{-0.15cm}
\item[(2)] either the particle leaves $B_L$,
\vspace{-0.15cm}
\item[(3)] or the particles uses an instruction ``sleep'' on a site $(x,y)$ such that $ x \leq x_1$.
\end{enumerate}
Then, we consider the other particles on the same
site and for each of them we employ the same procedure.
At the next step,  we consider the second site $z_2$ in the order we repeat
the same procedure for all its particles.
We proceed in this way until all the particles have been moved one time.

We let $\mathcal{N}_L$ be the number of particles that visit the origin at least one time.
Clearly, $m_{B_L, \eta, \tau}(0) \geq \mathcal{N}_L$.
In order to estimate $\mathcal{N}_L$, we adapt 
the idea of ``ghost'' explorers 
\cite{Lawler, Shellef} to our setting.
Namely, every time a particle starting from $z_i =(x_i, y_i)$ stops at an empty site
$(x,y)$ (which, by definition of stabilization procedure, must satisfy $x > x_i$),
we let a ghost start from $(x,y)$ and perform a random walk until it reaches
the inner boundary of $B_{2L}$, i.e., $\partial^{i} B_{2L} := \{ x \in B_{2L} \, \,\mbox{ s.t. }\, \, \exists \,\, y \in \mathbb{Z}^2 \setminus B_{2L}
\mbox{ and } y \sim x\}$.  Ghosts do not interact with other particles.
We let $W_L$ be the number of particles visiting the origin as a ghost or as an original particle and we let
$R_L$ be the number of particles visiting the origin only as a ghost. Then,
\begin{equation}
\mathcal{N}_L \,{\buildrel d \over =}\, W_L - R_L.
\end{equation}
The variables $W_L$ and $R_L$ are of course dependent. We first provide sufficient
conditions for $E[W_L] - E[R_L] \geq c  L$ for some $c>0$
and we then prove that such a condition implies
that $\mathcal{N}_L \geq \frac{c}{3}  L$ with high probability.

We now provide an estimation of the expectations of $W_L$ and $R_L$.
For any $z \in B_{2L}$ and for any $j \in \mathbb{N}$, we 
introduce the sequence
$\{ S^{z,j}(t), Y^{z,j}(t) \}_{t \in \mathbb{N}}$,
where $S^{z,j}(t)$ is a random walk with jump distribution
$p(\, \cdot \, )$ and starting from
$z$  and $\{\, Y^{z,j}(t) \, \}_{t \in \mathbb{N}} $
 is an infinite sequence
of independent and identically distributed
random variables such that $Y^{z,j}(0) = 1$ with probability
$\frac{\lambda}{1+ \lambda}$ and 
$Y^{z,j}(0) = 0$ with probability $\frac{1}{1 + \lambda}$.
We start with the estimation of $E[ W_L ]$.
Thus, we let from every particle
$(z,j)$, $z=(x,y) \in B_L$, $1 \leq j \leq \eta(z)$,
a simple random walk start and we count the number of them
visiting the origin before leaving $B_{2L}$ and before using
any instruction sleep on the set
$H_{x} := \{(x^{\prime}, y^{\prime}) \in \mathbb{Z}^2 \, : \, x^{\prime} \leq x \}$, i.e.,
\begin{align}
\label{eq:WL}
W_L \,{\buildrel d \over \geq}\,  \tilde{W}_L & := \sum\limits_{z \in B_{L}}  \sum\limits_{ 1 \leq j \leq \eta(z)} \mathbbm{1} ( \{ S^{z,j}( \tau_{\partial^{i} B_{2L} } ) = \mathbf{0}   \}  
\\ & ~~~~~~ 
\cap \{ \nexists t \leq \tau^{z,j}_{ \partial^{i} B_{2L}} \, \mbox{ s.t. } \, Y^{z,j}(t) = 1 \mbox{ and } S^{z,j}(t) \in H_x    \}    )
\end{align}
where $\mathbbm{1}( \, \cdot \, )$ is the indicator function, $\eta$ is the initial particle configuration
and $z = (x,y)$, $\tau^{z,j}_{ \{ \, \cdot \,\} }$ is the hitting time of $\{ \cdot \}$ for the random walk $X^{z,j}$.
The (stochastic) inequality holds as on the right-hand side 
we count only the walks that hit the inner boundary of $B_{2L}$ for the first time
at the origin and as, once the particle starting from $(x,y)$ turns to a ghost somewhere, 
it can explore the region $H_x$ without any restriction
related to the outcome of the instructions sleep. Thus, the condition
on the right-hand side is more restrictive.

The term $R_L$ is more difficult to handle. 
However, note that every ghost necessarily
starts its walk from a site of $B_L$ that is empty in the initial configuration $\eta$,
due to the order according to which particles are moved.
Thus, we provide a (stochastic) upper bound for $R_L$
by letting for every empty site  a random walk start and by counting
the number of them  hitting the inner boundary of $B_L$ at the origin, 
without any further restriction. We denote such a number 
by $\tilde{R}_L$. Therefore,
\begin{equation}
\label{eq:RL}
{R}_L \,{\buildrel d \over \leq}\, \tilde{R}_L = \sum\limits_{z \in B_{L}}  \mathbbm{1} \left( X^{z,j}( \tau_{\partial^{i} B_{2L} } ) = \mathbf{0}   \right)  \mathbbm{1} \left( \eta(z)=0  \right) 
\end{equation}

We let now $G_K = \{(x,y) \in \mathbb{Z}^2 \, \mbox{ s.t. } x=k\}$ and
$D_k = \{(x,y) \in \mathbb{Z}^2 \, \mbox{ s.t. } y=k\}$.
By using independence and translation invariance,
\begin{align*}
E[ \tilde{W}_L ] 
& =  \mu  \sum\limits_{(x,y) \in B_{L}} P  (  \{ S^{(x,y)}( \tau_{\partial^{i} B_{2L} } ) = \mathbf{0}   \} 
\\ & ~~~~~~~~~~~~~~~~ \cap \{ \nexists t \leq \tau^{(x,y)}_{ \partial^{i} B_{2L}} \, \mbox{ s.t. } \, Y^{(x,y)}(t) = 1 \mbox{ and } S^{(x,y)}(t) \in H_x    \}    )
\\   & \geq \mu  \sum\limits_{x = -L+1} \sum\limits_{y= -L^2}^{L^2} P ( \{ S^{(x,y)} \mbox{ reaches $G_0$ at $\mathbf{0}$ before
reaching $y + D_{L^2}$, $y + D_{-L^2}$ and $x  + G_{-L}$  }  \}  
\\ & ~~~~~~~~~~~~~~~~ \cap \{ \nexists t \in \mathbb{N}\, \mbox{ s.t. } \, Y^{(x,y)}(t) = 1 \mbox{ and } S^{(x,y)}(t) \in H_x    \}    ) 
\\   & = \mu  \sum\limits_{x= -L+1} 
\sum\limits_{y= -L^2}^{L^2} P ( \{ S \mbox{ reaches $G_{-x}$ at (-x,- y) before
reaching $D_{L^2}$, $D_{-L^2}$ and $ G_{-L}$  }  \}  
\\ & ~~~~~~~~~~~~~~~~ \cap \{ \nexists t \in \mathbb{N}\, \mbox{ s.t. } \, Y(t) = 1 \mbox{ and } S(t) \in H_0   \}    ) 
\\ &  \geq L  \mu  P ( \{ S \mbox{ reaches $G_{L}$ before
reaching $D_{L^2}$, $D_{-L^2}$,  and $G_{-L}$  }  \}  
\vspace{2cm}
\\ & ~~~~~~~~~~~~~~~~ \cap \{ \nexists t \in \mathbb{N}\, \mbox{ s.t. } \, Y(t) = 1 \mbox{ and } S(t) \in H_0   \}    ) .
\end{align*}
Note that we omitted any superscript for the random walk starting from the origin.
Observe that the last inequality holds as the sum is over the probability of disjoint events and as the condition
on the right-hand side is more restrictive.
By the law of large numbers and as the random walk spends only a finite amount
of time in $H_0$, the probability of the event in the right-hand side of the last inequality converges to $F(\lambda, p(\, \cdot \, ))$
as $L \rightarrow \infty$, which is defined before the statement of the theorem.
By using the same arguments, we obtain the corresponding equation for
$E[ \tilde{R}_L ]$,
\begin{align*}
\label{eq:ERL}
E[ \tilde{R}_L ] 
& =  (1-\mu)   \sum\limits_{(x,y) \in B_{L}} P  (  \{ S^{(x,y)}( \tau^{(x,y)}_{\partial^{i} B_{2L} } )  = \mathbf{0}\} ) \\
& \leq  (1-\mu)   \sum\limits_{x=-L+1}^{0} \sum\limits_{y=-L^3}^{L^3}  P  (  \{ S^{(x,y)}  \mbox{ hits $G_0$ at the origin }  \} ) \\
& =  (1-\mu)  \sum\limits_{x=-L+1}^{0} \sum\limits_{y=-L^3}^{L^3}  P  (  \{ S \mbox{ hits $G_{-x}$ at $(-x,-y)$ }  \} )  \\
& \leq  (1-\mu)  \sum\limits_{x=-L+1}^{0} \sum\limits_{y=-\infty}^{\infty}  P  (  \{ S \mbox{ hits $G_{-x}$ at $(-x,-y)$ }  \} \} )  \\
& \leq  (1-\mu)   L.
\end{align*}
Thus, if $\mu > \frac{1}{1+F(\lambda)}$, then for all $L$ large enough,
$E[ \tilde{W}_L ] - E[ \tilde{R}_L ] \geq \frac{[ \mu \,  F(\lambda, p(\, \cdot \, )) - (1-\mu) ]}{2}  L $.
By using the union bound, the Chebyshev inequality and by observing tha the variance of $\tilde{W}_L$
and $\tilde{R}_L$ can be bounded by their expectation, we prove that
$\mathcal{N}_L \geq  \frac{[ \mu \,  F(\lambda, p(\, \cdot \, )) - (1-\mu) ]}{6}  L$ with high probability, which
in turn implies that at ARW stays active almost surely by Lemma 3.
Indeed, let $c = \frac{[ \mu \,  F(\lambda) - (1-\mu) ]}{2}$,
\begin{equation}
\label{eq:unionbound}
\begin{split}
& {P}
 (W_L - R_L < \frac{c}{3}  L  ) \,  \leq  \, {P} (\tilde{W}_L - \tilde{R}_L < \frac{E [\tilde{W}_L - \tilde{R}_L]}{3}  ) \,    \\
\leq & {P} ( \tilde{W}_L - E [\tilde{W}_L] >\frac{E [\tilde{W}_L - \tilde{R}_L]}{3}) 
+ {P} ( \tilde{R}_L - \mathbb{E} [\tilde{R}_L] > \frac{E [\tilde{W}_L - \tilde{R}_L]}{3}) \\
\leq & {P} ( \tilde{W}_L - E [\tilde{W}_L] >\frac{E [\tilde{W}_L - \tilde{R}_L]}{3}) 
+ {P} ( \tilde{R}_L - \mathbb{E} [\tilde{R}_L] > \frac{E [\tilde{W}_L - \tilde{R}_L]}{3})
\end{split}
\end{equation}
For the second inequality we used the union bound.
We now use the Chebyshev inequality and 
the inequalities $Var[\tilde{W}_L] \leq  E[\tilde{W}_L]$ and $Var[\tilde{R}_L] \leq E[\tilde{R}_L]$,
which hold as $\tilde{W}_L$ and $\tilde{R}_L$ are the sum of random variables taking values $0$ or $1$.
Thus, from (\ref{eq:unionbound}),
\begin{equation}
\label{eq:bound2}
\begin{split}
 {P}
 (W_L - R_L < \frac{c}{3} L  ) \, & \leq  
 9 \frac{Var[\tilde{W}_L]}{E[\tilde{W}_L-\tilde{R}_L]^2} + 9  \frac{Var[\tilde{R_L}]}{E[\tilde{W}_L-\tilde{R}_L]^2} \\ 
 & \leq 9 \frac{E[\tilde{W}_L]}{E[\tilde{W}_L-\tilde{R}_L]^2} + 9  \frac{E[\tilde{R}_L]}{E[\tilde{W}_L-\tilde{R}_L]^2}  \\
  & \leq  \frac{18}{c^2   L}.
  \end{split}
\end{equation}
and, by taking the limit $L \rightarrow \infty$, this concludes the proof of the theorem.
\qed

\section{Lower bound}
\label{sect:lowerbound}
\begin{proof}[Proof of Theorem \ref{theo:depend}]
We provide a new lower bound for
 $\mu_c(\lambda, q)$ 
 and we show that $\mu_c(\lambda,q) > \frac{\lambda}{1+\lambda}$ if 
 $q \not\in \{0, 1\}$.
This implies the statement of the theorem, as 
from \cite{Hoffman}
it is known that $\mu_c(1, \lambda)  = \mu_c(0, \lambda) = \frac{\lambda}{1+\lambda}$.

Our goal is to estimate under which conditions on $\mu$, $\lambda$ and $q$ the next
condition holds,
\begin{equation}
\label{eq:lowerbound}
\exists \, c > 0 \mbox{ s.t.}\, \forall L \in \mathbb{N}, \, \, \, 
\mathcal{P}^{\nu}( m_{{V_L}, \eta, \tau} (0)= 0  ) > c,
\end{equation}
where $V_L = [-L,L]$.
Indeed, from Lemma 3,  (\ref{eq:lowerbound})
implies that ARW  fixates almost surely.
In order to prove \ref{eq:lowerbound}, we consider
the stabilization of $[-L,0]$ and of $[0,L]$ separately.
Indeed, observe that, by independence of instructions,
\begin{multline}
\label{eq:product}
\mathcal{P}^{\nu}( m_{ [-L, L], \eta, \, \tau} (0)= 0  )\geq      \mathcal{P}^{\nu}( m_{ [-L, -1],  \eta, \tau} (0)= 0  )
\, \mathcal{P}^{\nu}( m_{[1, L], \eta,\, \tau} (0)= 0  ) \,  \nu(\eta(0)=0),
\end{multline}
as for any instruction array $\tau$ and $\eta \in \Sigma$,
$$ m_{[-L, -1], \eta, \tau } (-1)= 0,  \, m_{[1, L], \eta, \tau} (1)= 0, 
  \mbox{ and } \eta(0) = 0   \implies  m_{ [-L, L], \eta, \tau} (0) = 0.$$

Without loss of generality, we consider $q \leq 1/2$.
Indeed, the case of $q\geq1/2$ can be recovered by reflection symmetry.
First, we consider the stabilization of $[-L, -1]$.
If $q < \frac{1}{2}$ and $V_L=[-L, -1]$, it is easy to prove that,
for any value of $\mu$ and $\lambda$,
(\ref{eq:lowerbound}) holds.
Indeed, recall that, by Lemma 4, by erasing from the instruction array all the instructions ``sleep''
on sites $x \leq 0$, the number of instructions used at the origin for stabilization can only increase.
Then, we move the particles in $x \leq 0$ one by one, until each of them leaves the set $[-L, -1]$.
The trajectory of each of them follows a simple random walk without any interaction, 
as the instructions ``sleep'' have been erased.
As the bias is to the left, the probability that no particle hits the origin is uniformly positive in $L$.

It remains to prove that (\ref{eq:lowerbound})
holds with $V_L = [1, L]$ and $q \leq \frac{1}{2}$.
For this, we modify the stabilization procedure that has been developed
by Rolla and Sidoravicius \cite{Rolla}, which is sketched in Section \ref{sub:RollaSid}.
Our stabilization algorithm is presented in Section  \ref{sub:ouralgorithm}.

\subsection{The stabilization procedure of Rolla and Sidoravicius}
\label{sub:RollaSid}
In this section we briefly describe the stabilization procedure
that has been developed by Rolla and Sidoravicius \cite{Rolla}.
The procedure explores a certain set of instructions of $\tau$
and identifies a suitable \textit{trap} for every particle.
The trap is a site where the particle finds an instruction ``sleep''
and turns to the $S$-state.
The trap is chosen in such a way that, when a particle is moved to its trap,
it does not wake up any of the particles that have already
turned to the $S$-state.
In the absence of a suitable trap, the algorithm fails. 
If a suitable trap is found for every particle, then  
we say that the algorithm is successful
and this implies that $m_{[0,L], \eta, \tau}(0) =0$.
The goal is to prove that the probability of success
is uniformly positive in $L$.

We let $X^1 \leq X^2 \leq \ldots \leq X^{N_L}$ be the
position of the particles in $[0,L]$ at time $0$,
ordered from the left to the right,
where $N_L$ is the total number of particles in $[0,L]$.
We assume $X^1 >0$, which occurs with positive probability.
We start from the leftmost particle in the set and we
``explore'' its putative trajectory until the origin is reached.
As the exploration starts from a site which is on the right of the origin,
the last ``explored'' instruction at any site 
must be ``go left''.
The trap is defined as the  \textit{leftmost instruction ``sleep'' among 
those right below the last instructions ``go left''}.  
We denote the site where the trap is located as $T^1$.
Then, the particle is moved until such an instruction ``sleep''
is reached. For this, all the instruction ``sleep'' 
belonging to the set of explored instructions 
and which are not the trap are ignored.
Lemma 4 guarantees that, if instructions
``sleep'' of $\tau$ are ignored, then the total
number of instructions that must be used at $0$
to stabilize $[0, L]$ cannot be smaller than 
$m_{[0,L], \eta, \tau}(0)$. 
This is important, 
as we need to provide sufficient conditions for $m_{[0,L], \eta, \tau}(0) = 0$.

At the second step, we consider the second leftmost particle in $[0,L]$.
Starting from $X^2$, we explore its putative trajectory
until the site $T^1$ is reached.
As before, we let the trap be the 
leftmost instruction ``sleep'' among 
those right below the last instructions ``go left''.
We let $T^2$ be the site where the trap of the second particle is located.
We move such a particle to its trap ignoring
all the instructions sleep on the way to the trap.

Moving from the left to the right,
we repeat this procedure for every particle in $[0, L]$. 
The algorithm fails when no suitable trap
is found for one particle.
This might occur only in two cases.
Namely, when we explore the putative trajectory of the particle
starting from $X^i$, 
if  no instruction ``sleep'' is found right below the last instruction
``go left'' at any of the explored sites 
or if such instruction ``sleep'' is found,
but it is not located on the left of $X^{i+1}$,
then the algorithm fails.

Note that not all the instructions belonging to the explored
path are ``used'' by the particle. 
Successful algorithm means that no particle
ever visits sites hosting instructions that belong to previous explorations
and that have not been used (\textit{corrupted region}).
Indeed, for all $i$, the region of explored sites for $X^i$ 
is always on the right of the trap $T^{i-1}$,
while the corrupted region
is on sites $\leq T^{i-1}$.
This is necessary to  have a control 
on the joint distribution 
of the outcome of different explorations
by using independence of instructions.

\subsection{Our algorithm}
\label{sub:ouralgorithm}
The difference between our stabilization algorithm and the one developed by Rolla
and Sidoravicius involves the criterion according to which the trap is chosen.
By looking only at the instructions located right below the last instruction ``go left'',
as in the algorithm by Rolla and Sidoravicius, 
one  ignores most of the instructions ``sleep'' which belong
to the set of explored instructions. 
In order to save space, we provide a different definition
of traps by taking into account for such instructions ``sleep'' as well.
This allows to stabilize particles closer one to the other than in~\cite{Rolla}.

We move from the leftmost particle in $[0, L]$ to the right
and we explore the putative trajectory of every particle, as before.
Our traps are defined as the 
\textit{last instruction ``sleep'' that has been discovered 
during the whole exploration} (without
requiring for it to be right below the last instruction ``go left'').
In order to separate the region of corrupted sites
from the region of unexplored sites, we introduce \textit{barriers}.
The barrier is defined as the \textit{rightmost site 
on the explored path that has been visited after the last instruction sleep}
(see Figure \ref{fig:exploration1} and \ref{fig:exploration2}).
We let  $T^i$ and $A^i$ be the 
site where the trap and the barrier 
of the $i$-th exploration are located respectively.
Every exploration is carried on until the barrier that has been
identified at the previous step is reached.
The barrier $A^i$ must always be on the left of 
$X^{i+1}$. If during the exploration no instruction
``sleep'' is found or if such an instruction
is found, but $A^i \geq X^{i+1}$,
then we declare the algorithm to have failed.
Thus, the barrier separates the corrupted region from
the space that is available for the next exploration.

Our stabilization procedure is sensitive to the bias 
of the jump distribution as, the weaker is the bias, 
the larger is the number of times the exploration visits the same site. 
This in turn implies that, the weaker is the bias, the 
higher is the chance of finding instructions ``sleep'' close to the previous barrier.
\begin{figure}[!]
\includegraphics[scale=0.40]{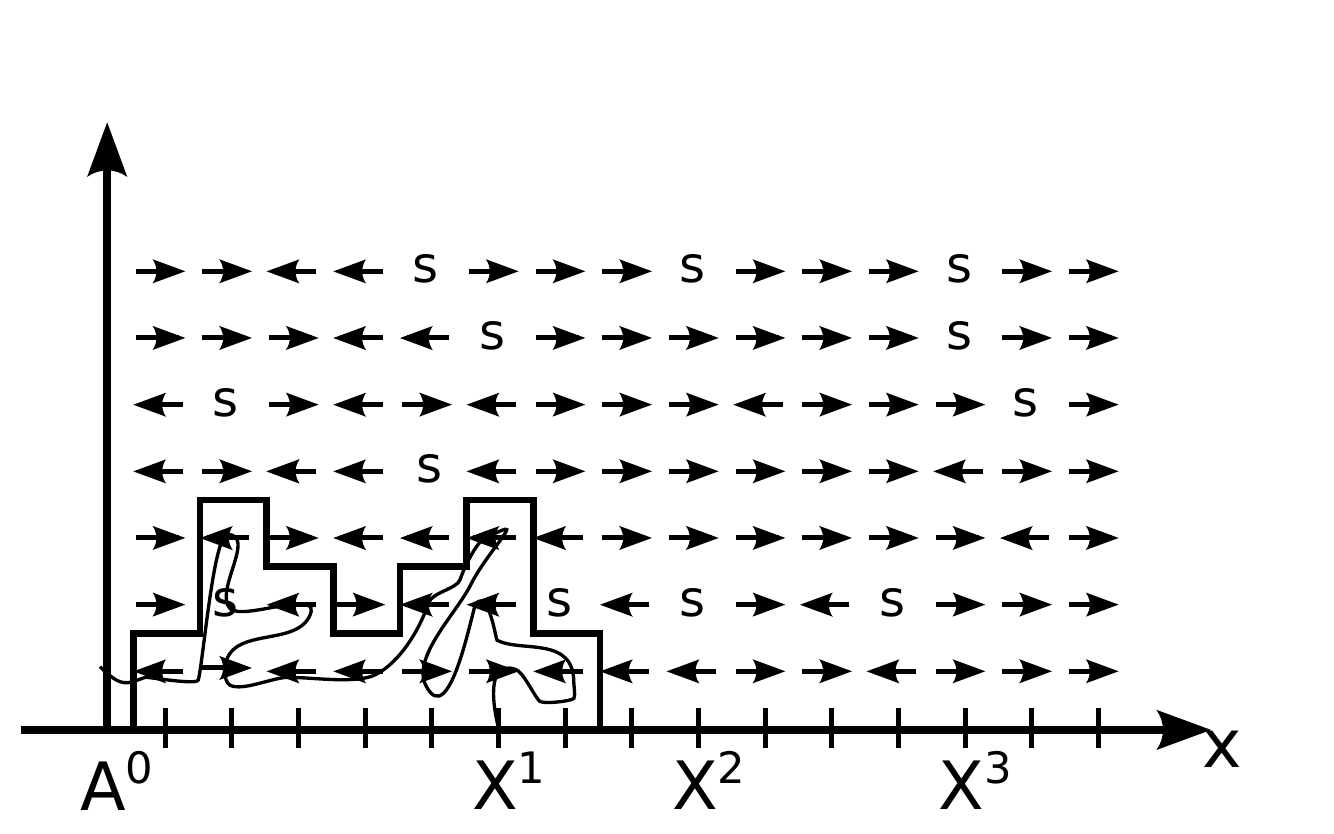}
\includegraphics[scale=0.35]{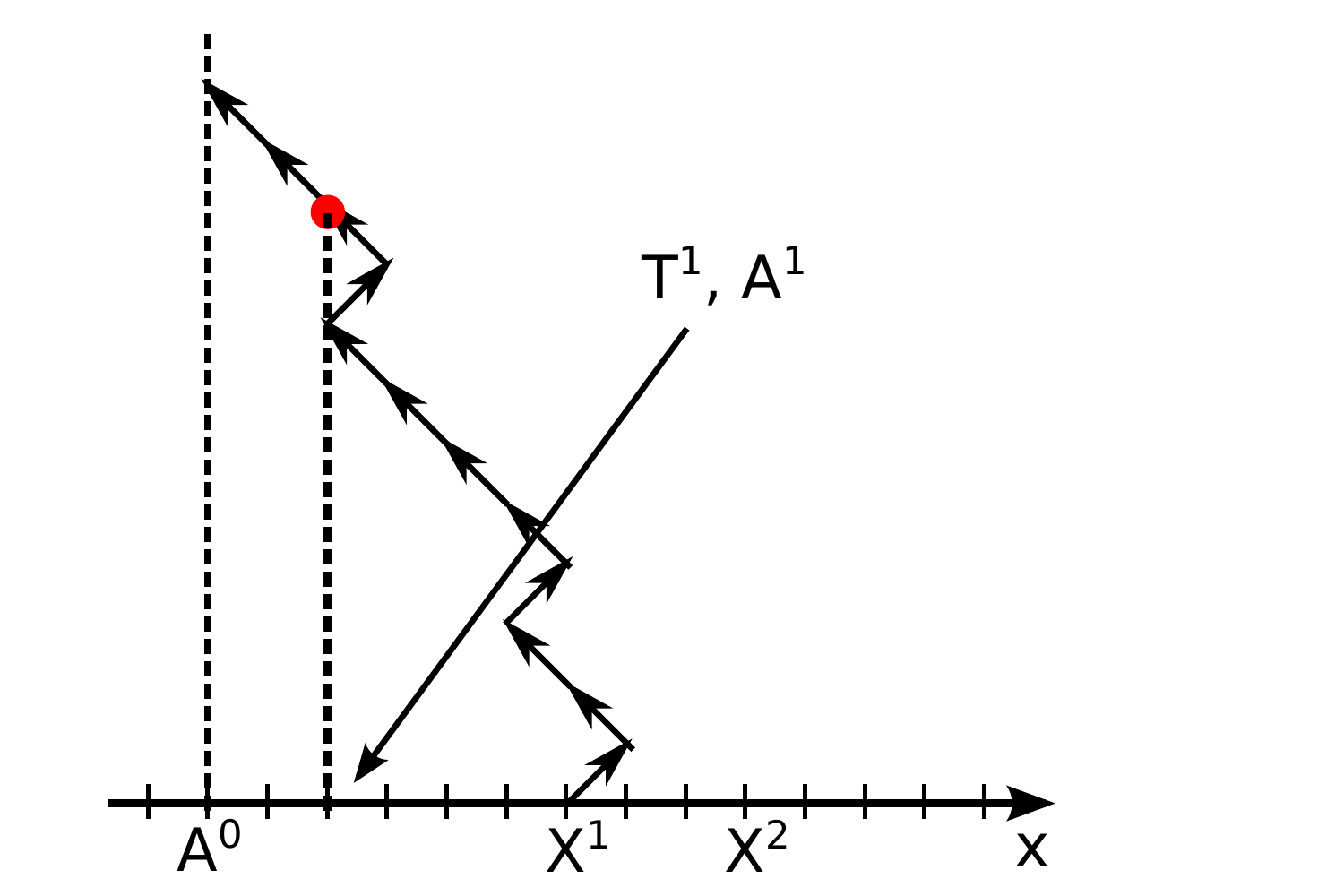}
\caption{Representation of the first exploration. \textit{Left:} instructions belonging to the first exploration. \textit{Right:}  representation of the first exploration as a simple random walk path. Red circles represent the steps of such a path that are related to the presence of an instruction ``sleep''. In the example in the figure, the trap and the barrier are identified with the same site. }
\label{fig:exploration1}
\end{figure}

\begin{figure}[!]
\includegraphics[scale=0.35]{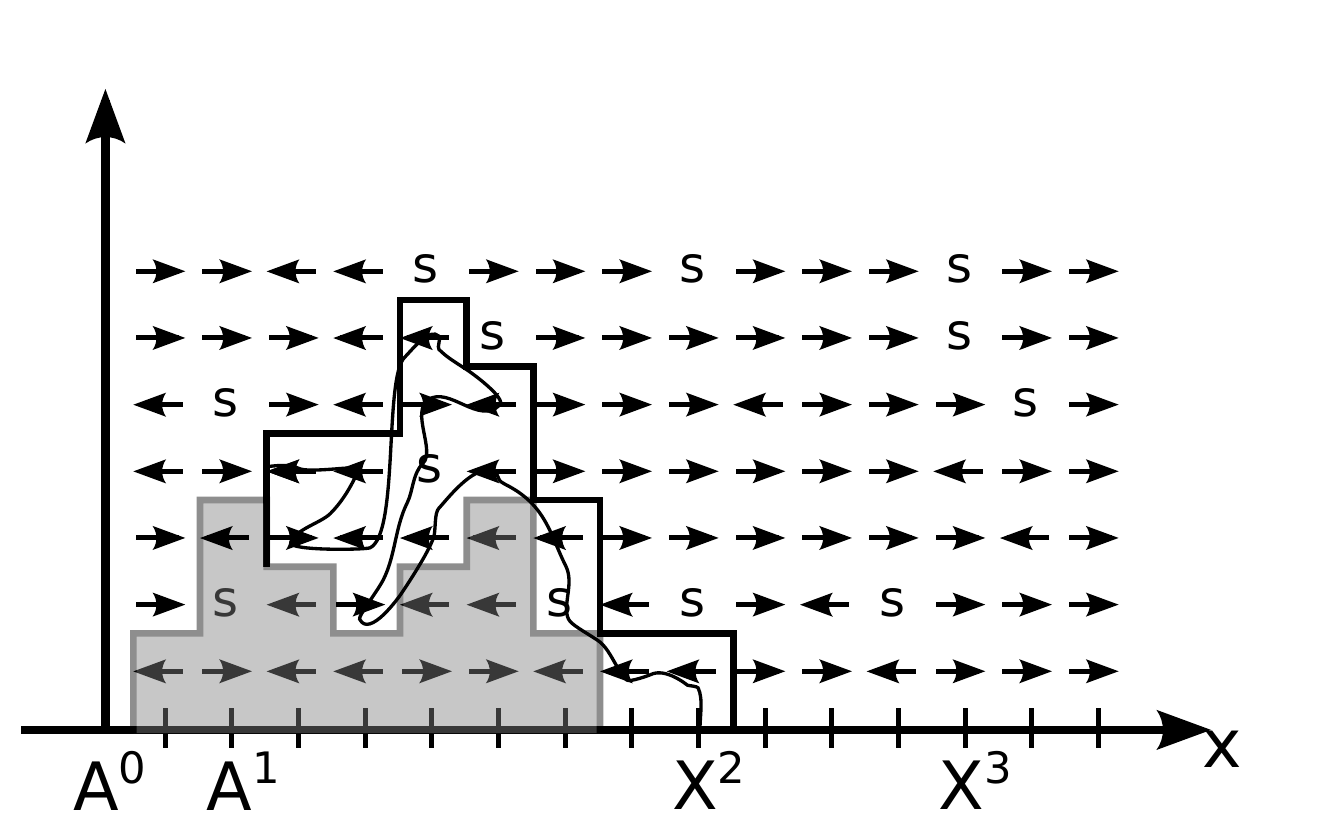}
\includegraphics[scale=0.35]{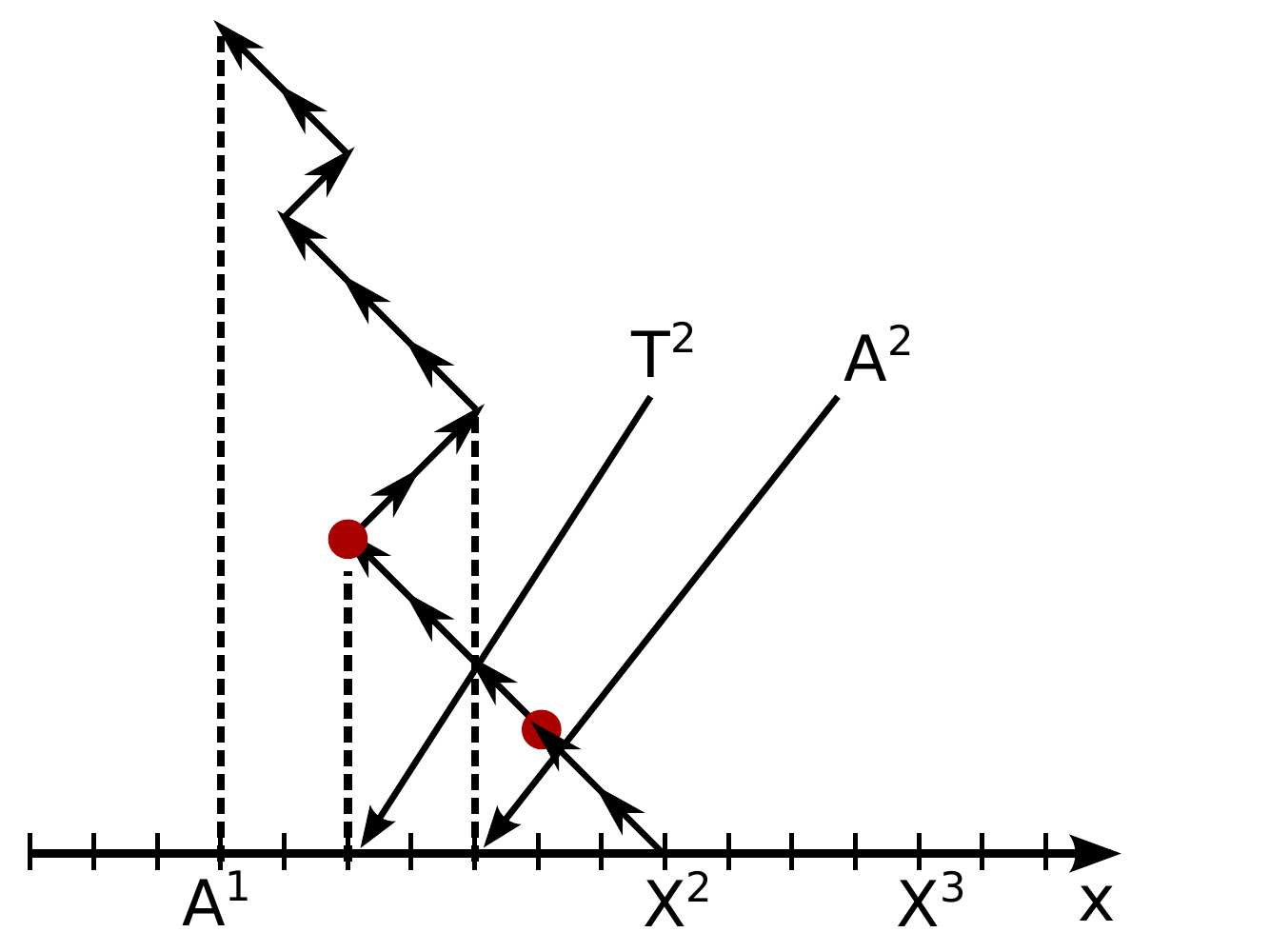}
\caption{Representation of the second step of the stabilization procedure.
\textit{Left:} the dark region represents the first exploration. The instructions below the continuous line in the non-dark region
represent the second exploration. \textit{Right:} representation of the second exploration as a simple random walk path. Red circles represent the steps of the path that are related to the presence of an instruction ``sleep''. Referring to the path in the figure as an example, according to the criterion employed in~\cite{Rolla} the trap would be taken as the site hosting the rightmost  instruction ``sleep'' between the two. Instead in our algorithm the trap is identified as the site denoted by $T^2$ in the figure. Furthermore, the barrier is identified as the site denoted by $A^{2}$. }
\label{fig:exploration2}
\end{figure}

\paragraph*{Probability of successful stabilization:} 
We let $X^1 \leq X^2 \leq \ldots \leq X^{N_L}$ be the positions of the particles at time $0$, ordered from the left to the right.
We let $A^i$ and $T^i$ be  the position of the barrier and of the trap for the particle $X^i$ respectively.

As success of the algorithm is a sufficient condition for $m_{[0,L], \eta, \tau}  (0) = 0$, then 
\begin{equation}
\label{eq:condition1}
\begin{split}
\mathcal{P}^{\nu}\left( m_{[0,L], \eta, \tau}  (0) = 0 \right) & \geq 
\mathcal{P}^{\nu} \left( 1 \leq \forall i \leq N_L, \, \, \,   A^{i} \leq X^{i} \right).
\end{split}
\end{equation}
We now prove that if $\mu< B(\lambda, q)$, where $B(\lambda, q)$ is a function such that
for every $\lambda$, $q \in \{0,1\}$, $B(\lambda, q) > \frac{\lambda}{1+\lambda}$,
then the right-hand site of (\ref{eq:condition1}) is uniformly positive in $L$.

The probability of success of the algorithm cannot increase with $L$,
as particles are ``killed'' at the boundary. 
Thus, for a lower bound for (\ref{eq:condition1}), 
we refer to the stabilization of the set $[0, \infty)$.
We claim that the position  $A^1$ of the first barrier follows a distribution
having expectation $E[A^1]$  which is such that
$E[A^1]  < \frac{1 + \lambda}{\lambda}$ if $q \not\in \{0, 1 \}$.
To be more precise, the same as in~\cite{Rolla}, the claim is that the probability
space can be enlarged so that we can define a random variable $Y^1$ independent of $\eta$
whose expectation $E[Y^1] $ has the property above
and such that the first step of the construction is successful only if $Y^1 \leq X^1$,
in which case the position $A^1$ of the first barrier is given by $A^1 = Y^1$.
Indeed, if at least an instruction sleep has been found in $[0, X^1]$ 
before hitting the barrier $A^0=0$,
we take $Y^1$ as the \textit{rightmost site
that has been visited starting from the last instruction sleep that has been found before
hitting $A^0$}.
Namely, we let $S^y(t)$ be a random walk starting from $y \in \mathbb{N}$ and we let
$\{ \, R(t) \, \}_{t \in \mathbb{N}}$ be a sequence of i.i.d. random variables
such that $R(0)=1$ with probability
$\frac{\lambda}{1 + \lambda}$ and $R(0)=0$ with probability $\frac{1}{1 + \lambda}$.
As after any exploration step the probability to ``discover'' an instruction
``sleep'' is $\frac{\lambda}{1+\lambda}$ independently, from the considerations above we conclude that,
for any $k \in \mathbb{N}$,
\begin{align*}
\mathcal{P}^{\nu} \left(Y^1=k \, | \, Y^1 \leq X^1 \right) 
&= \mathcal{P}^{\nu} \lr{  \max\{   x \in \mathbb{N} \, \mbox{ s.t. }  \, 
S^{X^1}(t)=x  \mbox{ for some } t \mbox{ s.t. } \tilde{\tau}^{X^1} \leq  t < \tau_0^{X^1}  \} = k\,   \right.  \mid  \\
&  ~~~~~~~~~ \left. \exists t \leq \tau^{X^1}_0 \, \mbox{ s.t. } R^{X^1}(t)=1 \mbox{ and } S^{X^1}(t) \leq X^1 } \\
&=\lim\limits_{y \rightarrow \infty}  \mathcal{P}^{\nu} {\huge (}  \max\{ x \in \mathbb{N} \, \mbox{ s.t. } \, S^{y}(t)=x,    \, \,  \tilde{\tau}^{y} \leq \exists t < \tau_0^{y} \} =k \mid  \\
& ~~~~~~~~~  \exists t \leq \tau^y_0 \, \mbox{ s.t. } R^y(t)=1 \mbox{ and } S^y(t) \leq X^1 {  )},
\end{align*}
where $\tau^y_0$ is the hitting time of the origin for the random walk $S^y$,
$\tilde{\tau}^y_0 = \max\{t \leq \tau^y_0\, : \, R(t)=1 \}$ is the last time an instruction sleep has been found before hitting
the barrier $A^0$ and the last equality follows 
from the Markov property.
Instead, if no instructions sleep have been found in $[0, X^1]$, we sample $Y^1$ as,
\begin{align*}
\label{eq:sampleY1}
\mathcal{P}^{\nu} \left( Y^1=k \, | \, Y^1 > X^1 \right)   &=\lim\limits_{y \rightarrow \infty}  \mathcal{P}^{\nu} \left( \max\{   x \in \mathbb{N} \, \mbox{ s.t. }  \, 
S^{y}(t)=x  \mbox{ for some } t \mbox{ s.t. } \tilde{\tau}^{y} \leq  t < \tau_0^{y}  \} = k\, |  \right. \\
& \left. ~~~~~~~~~ \nexists t \leq \tau^y_0 \, \mbox{ s.t. } R(t)=1 \mbox{ and } S^y(t) \leq X^1 \right).
\end{align*} 
Thus, for any $k \in \mathbb{N}$,
$$\mathcal{P}^{\nu} \left(Y^1=k  \right) = \lim\limits_{y \rightarrow \infty}  \mathcal{P}^{\nu}\left(\max\{   x \in \mathbb{N} \, \mbox{ s.t. }  \, 
S^{y}(t)=x  \mbox{ for some } t \mbox{ s.t. } \tilde{\tau}^{y} \leq  t < \tau_0^{y}  \} = k\  \right).$$
By symmetry, $Y^1$ is distributed as maximum of $\{S^0(0), S^0(1), \ldots S^0(G)\}$,
where the random walk $S^0$ is conditioned to be positive at all times $t \geq 1$
and  $G$ follows a geometric distribution with parameter $\frac{1 + \lambda}{\lambda}$.
Thus, if $q=0$ then $E[Y^1] = \frac{1 + \lambda}{\lambda}$, whereas
if $q \in (0, \frac{1}{2}]$ then $E[Y^1]  < \frac{1 + \lambda}{\lambda}$.

The proof proceeds now the same as in~\cite[Proof of Theorem 2]{Rolla}. Namely, 
there is a sequence of i.i.d. variables $Y^1$, $Y^2$, $Y^3$, $\ldots$
with the property that the $n$-th step is successful if and only if the previous steps are successful and 
$A^{k-1} + Y^k < X^k$, in which case $A^k = A^{k-1} + Y^k$.
The algorithm succeeds with positive probability if $E[Y^1] <  \frac{1}{\mu}$.
By defining $B(\lambda, q) := \frac{1}{E[Y^1]}$ and by recalling the above-mentioned properties
of  $E[Y^1]$, the proof of the theorem follows.

In particular, by using standard probability tools, one can prove that for any $\lambda \in (0, \infty)$,
$B(\lambda, q)$ is strictly increasing with respect to $q$ in $[0, \frac{1}{2})$
and can derive its analytical expression, which is plotted in  Figure \ref{fig:plot} and \ref{fig:coparison} 
for some values of $\lambda$ and $q$.
\end{proof}

\section{Concluding remarks}
We shall end this article with few comments related to our work.
First of all, our results show that in the case of biased jump distribution, 
by ``stabilizing'' the interval $[-L,L]$, 
the expected number of visits at the origin
is at least linear in $L$ for any $\mu > \mu_1$,
where $\mu_1$ is some number $\mu_1 \geq \mu_c$.
On the other hand, 
such a number is bounded from above by the number of visits  
in the case of no interaction ($\lambda=0$),  which is linear in $L$ 
for any $\mu \in (0, \infty)$.
Hence, it is reasonable to conjecture that
${E}^{\nu}[ m_{[-L,L], \eta, \tau}(0)] = O(L)$ for any $\mu>\mu_c$.

The question whether $\mu_c <1$ has  received
considerable attention recently. In their recent article~\cite{Rolla_Tournier},
Rolla and Tournier consider ARW with biased jump distribution on $\mathbb{Z}^d$ 
and they prove that
$\mu_c(\lambda) \rightarrow 0$ as $\lambda \rightarrow 0$
even when $d \geq 2$.
Concerning the case of unbiased jumps,
the question whether $\mu_c < 1$ for any $\lambda$ is still open in wide generality.
The only positive answer to such a question
has been provided by Stauffer and Taggi~\cite{Stauffer}
on graphs where the random walk has a positive speed. 
The simpler question of $\mu_c < 1$ for $\lambda$ small enough
has been positively answered by Basu, Kanguly and Hoffman~\cite{Basu} on 
$\mathbb{Z}$ and by Stauffer and Taggi~\cite{Stauffer} on all transient graphs.
Remarkably, even such a simpler question remains open for $\mathbb{Z}^2$.

\section*{Acknowledgements}
The author is grateful to Artem Sapozhnikov for illuminating discussions,
for suggesting an argument employed in the proof of Theorem \ref{theo:maintheo2},
and for pointing out a mistake in the proof of Theorem \ref{theo:depend}.
The author thanks Vladas Sidoravicius for inviting him to IMPA and for inspiring discussions
and Augusto Teixeira for fruitful discussions concerning this work.
The author is grateful to Leonardo Rolla for showing how to simplify the proof of 
Theorem \ref{theo:maintheo1}
and for very useful comments. 
The author thanks the referee, as he helped to simplify the exposition significantly.

\end{document}